\documentclass[12pt,leqno,letterpaper]{article}

\usepackage{amsmath,amsthm,enumerate,amssymb}
\usepackage[utf8]{inputenc}

\usepackage[english]{babel}
\usepackage{graphicx}
\usepackage{color}

\usepackage{graphicx}

\newtheorem{theorem}{Theorem}[section]

\newtheorem{lemma}[theorem]{Lemma}
\newtheorem{corollary}[theorem]{Corollary}

\newcommand{\tr}{{\rm Tr\hskip -0.2em}~}

\DeclareMathOperator{\frechetdiff}{\mathit d}
\newcommand{\fd}[1]{\frechetdiff\hskip -0.3em{#1}}


          \newcommand{\df}[2]{\frac{d#1}{d#2}}
          




\begin{document}

\title{Golden-Thompson's inequality\\
for\\
deformed exponentials}
\author{Frank Hansen}
\date{September 3, 2014}

\maketitle

\begin{abstract} Deformed logarithms and their inverse functions, the deformed exponentials, are important tools in the theory of non-additive entropies and non-extensive statistical mechanics. We formulate and prove counterparts of  Golden-Thompson's trace inequality for
 $ q $-exponentials with parameter $ q $ in the interval $ [1,3]. $
\\[1ex]
{\bf MSC2010} 47A643\\[1ex]
{\bf{Key words and phrases:}}  deformed exponentials; Golden-Thompson's trace inequality.

\end{abstract}

\section{Introduction and main result}

Tsallis~\cite{kn:tsallis:2009} generalised in 1988 the standard Bolzmann-Gibbs entropy to a non-extensive quantity $ S_q $ depending on a parameter $ q. $ In the quantum version it is given by
\[
S_q(\rho)=\frac{1-\tr\rho^q}{q-1}\qquad q\ne 1,
\]
where $ \rho $ is a density matrix. It has the property that
$
S_q(\rho)\to S(\rho)
$
for $ q\to 1, $ where $ S(\rho)=-\tr\rho\log\rho $ is the von Neumann entropy. The Tsallis entropy may be written on a similar form
\[
S_q(\rho)=-\tr\rho\log_q(\rho),
\]
where the deformed logarithm $ \log_q $ is given by
\[
\log_q x=\int_1^x  t^{q-2}\,dt = \left\{\begin{array}{ll}
                                                         \displaystyle\frac{x^{q-1}-1}{q-1}\quad &q> 1\\[2ex]
                                                         \log x                                 &q=1
                                                         \end{array}\right.
\]
for $ x>0. $ The deformed logarithm is also denoted the $ q $-logarithm. The inverse function $ \exp_q $ is called the $ q $-exponential and is given by 
\[
\exp_q(x)=(x(q-1)+1)^{1/(q-1)}\qquad\text{for}\quad x>\frac{-1}{q-1}\,.
\]
The $ q $-logarithm and the $ q $-exponential functions converge, respectively, to the logarithmic and the exponential functions for $ q\to 1. $

The aim of this article is to generalise Golden-Thompson's trace inequality \cite{kn:golden:1965, kn:thompson:1965} to deformed exponentials. The main result is the following:

\begin{theorem}\label{main theorem} 

Let $ A $ and $ B $ be positive definite matrices.

\begin{enumerate}[(i)]

\item If $ 1\le q<2 $ then
\[
\tr\exp_q(A+B)\le \tr\exp_q(A)^{2-q}\bigl(A(q-1) + \exp_q B\big).
\]

\item If $ 2\le q \le 3 $ then
\[
\tr\exp_q(A+B)\ge \tr\exp_q(A)^{2-q}\bigl(A(q-1) + \exp_q B\big).
\]

\end{enumerate}
\end{theorem}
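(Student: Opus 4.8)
The plan is to exploit the identity $\exp_q'(x)=\exp_q(x)^{2-q}$, which one checks directly from the defining formula, so that the factor $\exp_q(A)^{2-q}$ on the right-hand side is nothing but $\exp_q'(A)$ in the functional calculus. Since $\exp_q''(x)=(2-q)\exp_q(x)^{3-2q}$, the function $\exp_q$ is convex when $1\le q<2$ and concave when $2\le q\le 3$; this is exactly what distinguishes the two cases, and the boundary value $q=2$ gives the affine function $\exp_2(x)=1+x$ together with equality throughout. So both parts should emerge from a single tangent-type argument whose direction is governed by this convexity.

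First I would rewrite the right-hand side. Using $(q-1)A=\exp_q(A)^{q-1}-I$ one finds $\exp_q'(A)\,(q-1)A=\exp_q(A)-\exp_q'(A)$, and hence inequality (i) is equivalent to the clean tangent-type statement
\[
\tr\bigl[\exp_q(A+B)-\exp_q(A)\bigr]\le \tr\bigl[\exp_q'(A)(\exp_q(B)-I)\bigr],
\]
with the reverse inequality in case (ii). Next I would pass to integral form along the segments $t\mapsto A+tB$ and $t\mapsto tB$. Since $\frac{d}{dt}\tr f(A+tB)=\tr[f'(A+tB)B]$ and $\exp_q(B)-I=\int_0^1\exp_q'(tB)B\,dt$, the displayed inequality reduces to the pointwise comparison
\[
\tr\bigl[\exp_q'(A+tB)\,B\bigr]\le \tr\bigl[\exp_q'(A)\,\exp_q'(tB)\,B\bigr]\qquad(0\le t\le 1).
\]

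Writing $X=I+(q-1)A$, $Z=I+(q-1)tB$ and $s=(2-q)/(q-1)$, so that $\exp_q'(A)=X^{s}$ and $\exp_q'(tB)=Z^{s}$, this pointwise comparison is exactly
\[
\tr\bigl[(X+Z-I)^s(Z-I)\bigr]\le \tr\bigl[X^sZ^s(Z-I)\bigr],
\]
for positive definite $X,Z\ge I$, with $s>0$ in case (i) and $-\tfrac12\le s<0$ in case (ii). Its scalar content is transparent: since $Z-I\ge 0$ it amounts to $(xz)^s\ge(x+z-1)^s$, i.e. to $xz\ge x+z-1$, i.e. to $(x-1)(z-1)\ge 0$, the sign of $s$ fixing the direction through the monotonicity of $t\mapsto t^s$.

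The non-commutative lift of this last inequality is the heart of the matter and I expect it to be the main obstacle. The naive operator upgrade $(X+Z-I)^s\le Z^{s/2}X^sZ^{s/2}$ already fails at $s=1$, even though there the traced inequality collapses to the manifestly nonnegative $\tr[(X-I)(Z-I)^2]$; so the positive weight $Z-I$, which commutes with $Z$, must be used in an essential way rather than discarded. To treat general $s$ I would represent the power through the integral $T^s=\tfrac{\sin\pi s}{\pi}\int_0^\infty \lambda^{s-1}\,T(\lambda+T)^{-1}\,d\lambda$ valid for $0<s<1$, reaching the remaining ranges by peeling off integer powers and by the analogous representation for negative exponents, thereby converting the comparison into one between resolvents where operator monotonicity and the commutation of $Z-I$ with $Z$ can be brought to bear, and only then restoring the trace against the weight $Z-I$.
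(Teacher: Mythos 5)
Your reductions are correct as far as they go: the identity $\exp_q'(x)=\exp_q(x)^{2-q}$ together with $(q-1)A=\exp_q(A)^{q-1}-1$ does turn part (i) into
\[
\tr\bigl[\exp_q(A+B)-\exp_q(A)\bigr]\le\tr\bigl[\exp_q(A)^{2-q}(\exp_q(B)-1)\bigr],
\]
and the passage to the pointwise-in-$t$ comparison is legitimate, since $\frac{d}{dt}\tr f(A+tB)=\tr\bigl[f'(A+tB)B\bigr]$ and $tB$ commutes with $B$. But the argument stops exactly at the point you yourself flag as the heart: the trace inequality
\[
\tr\bigl[(X+Z-1)^s(Z-1)\bigr]\le\tr\bigl[X^sZ^s(Z-1)\bigr],\qquad X,Z\ge 1,
\]
is never proved, and the sketched route does not close. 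Three concrete problems. First, case (i) needs \emph{every} $s\in(0,\infty)$, since $s=(2-q)/(q-1)\to\infty$ as $q\to 1$, whereas the representation $T^s=\frac{\sin\pi s}{\pi}\int_0^\infty\lambda^{s-1}T(\lambda+T)^{-1}\,d\lambda$ covers only $0<s<1$; ``peeling off integer powers'' is not a routine step here because the weight $Z-1$ sits inside the trace and $(X+Z-1)^{n+r}$ does not factor compatibly with it. Second, even on $0<s<1$, the right-hand side $X^sZ^s$ is a product of powers of two \emph{different} matrices and is not produced by a single resolvent integral; writing both powers as integrals gives a double integral, and there is no pointwise-in-the-parameters comparison to which operator monotonicity applies --- indeed the one natural sufficient operator inequality, $(X+Z-1)^s\le Z^{s/2}X^sZ^{s/2}$, fails already at $s=1$, as you note. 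Third, and most seriously, your pointwise inequality is stronger than what the theorem asserts (the theorem only requires the $t$-integrated version), and its truth is established nowhere: it holds at $s=1,2,3$ by expansion in $P=X-1$, $Q=Z-1$ (at $s=1$ the difference is $\tr\bigl[(X-1)(Z-1)^2\bigr]\ge 0$; at $s=3$ one already needs $\tr\bigl[(PQ)^2\bigr]\le\tr\bigl[P^2Q^2\bigr]$ to absorb the sign-indefinite word $\tr[PQPQ]$), but traces of words in two positive matrices can be negative in general, and nothing guarantees the cancellations persist for large or noninteger $s$, let alone in the reversed direction for the range $s\in[-\tfrac12,0)$ needed in (ii).

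For contrast, the paper never attempts a pointwise comparison. It proves (Theorem~\ref{concavity (convexity) of deformed trace functions}, resting on the Carlen--Lieb concavity/convexity of $(A_1,\dots,A_k)\mapsto\tr\bigl(H_1^*A_1^pH_1+\cdots+H_k^*A_kH_k\bigr)^{1/p}$) that $\varphi(A_1,\dots,A_k)=\tr\exp_q\bigl(\sum_i H_i^*\log_q(A_i)H_i\bigr)$ is positively homogeneous and concave for $1\le q\le 2$ (convex for $2\le q\le 3$), applies the tangent inequality of Lemma~\ref{differential inequality} to get Corollary~\ref{main inequality for k variables}, and then substitutes $H_1=\varepsilon^{1/2}$, $B_1=\exp_q(\varepsilon^{-1}L_1)$, $B_2=\exp_q((1-\varepsilon)^{-1}L_2)$ and lets $\varepsilon\to 0$. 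That deep joint-convexity input is what does the work your missing inequality would have to do; your scalar convexity observation about $\exp_q$ is only heuristic and cannot replace it. Until you can prove the pointwise trace inequality for the full range of $s$ (or restructure the argument so that only an integrated, convexity-accessible statement is needed), the proposal is a clean and correct reformulation, but not a proof.
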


Notice that we for $ q=1 $ recovers Golden-Thomson's trace inequality
\[
\tr\exp(A+B)\le\tr\exp(A)\exp(B).
\]
This inequality is valid for arbitrary self-adjoint matrices $ A $ and $ B. $ However, it is sufficient to know the inequality for positive definite matrices, since the general form follows by multiplication with positive numbers.

\section{Preliminaries}

We collect a few well-known results that we are going to use in the proof of the main theorem.

The $ q $-logarithm is a bijection of the positive half-line onto the open interval $ (-(q-1)^{-1},\infty), $ and the $ q $-exponential is consequently a bijection of the interval $ (-(q-1)^{-1},\infty) $ onto the positive half-line. For $ q>1 $ we may thus safely apply both the
$ q$-logarithm and the $q $-exponential to positive definite operators.
We also notice that
\begin{equation}\label{derivative of $ q-exponential}
\df{}{x}\log_q(x)=x^{q-2}\qquad\text{and}\qquad \df{}{x}\exp_q(x)=\exp_q(x)^{2-q}\,.
\end{equation}
The proof of the following lemma is rather easy and may be found in \cite[Lemma 5]{kn:lieb:1973:1}.

\begin{lemma}\label{differential inequality}
Let $ \varphi\colon\mathcal D\to\mathcal A_\text{sa} $ be a map defined in a convex cone $ \mathcal D $ in a Banach space $ X $ with values in the self-adjoint part of a $ C^* $-algebra $ \mathcal A. $ If $ \varphi $  is Fréchet differentiable, convex and positively homogeneous then
\[
d\varphi(x)h\le \varphi(h).
\]
for $ x,h\in\mathcal D. $
\end{lemma}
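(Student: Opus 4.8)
The plan is to derive the desired inequality $d\varphi(x)h\le\varphi(h)$ from two elementary consequences of the hypotheses: first, that a convex positively homogeneous map is subadditive, and second, the first-order characterisation of convexity, namely that the graph lies above each of its tangents. Both facts are standard for scalar convex functions; the only genuine work is to check that they survive in the operator-valued setting, where $\le$ denotes the Löwner order on $\mathcal A_\text{sa}$.

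First I would record subadditivity. Since $\mathcal D$ is a convex cone, $2x,2h\in\mathcal D$, and convexity applied at the midpoint together with positive homogeneity gives
\[
\varphi(x+h)=\varphi\Bigl(\tfrac12(2x)+\tfrac12(2h)\Bigr)\le\tfrac12\varphi(2x)+\tfrac12\varphi(2h)=\varphi(x)+\varphi(h).
\]
This is a purely algebraic manipulation and requires nothing beyond the two named properties.

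Next I would establish the tangent inequality $d\varphi(x)h\le\varphi(x+h)-\varphi(x)$. Fix $x,h\in\mathcal D$ and consider the one-variable map $g(t)=\varphi(x+th)$ for $t\in[0,1]$, which is well defined because $x+th\in\mathcal D$ by convexity of the cone. Restricting $\varphi$ to this segment, convexity yields $g(t)\le(1-t)g(0)+tg(1)$, hence $t^{-1}\bigl(g(t)-g(0)\bigr)\le g(1)-g(0)$ for every $t\in(0,1]$. Fréchet differentiability makes the left-hand side converge in norm to $g'(0)=d\varphi(x)h$ as $t\to0^+$. Since the positive cone $\mathcal A_+$ is norm-closed, the set of elements dominated by $g(1)-g(0)$ is closed, so the limit inherits the bound, giving $d\varphi(x)h\le g(1)-g(0)=\varphi(x+h)-\varphi(x)$. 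Combining this with the subadditivity estimate yields $d\varphi(x)h\le\varphi(x+h)-\varphi(x)\le\varphi(h)$, which is the assertion.

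The step I expect to demand the most care is the passage to the limit in the operator order: I must ensure that the difference quotients, each lying below $g(1)-g(0)$, keep that bound after taking the norm limit, which is exactly where closedness of $\mathcal A_+$ enters. A variant that sidesteps subadditivity is to differentiate the homogeneity relation $\varphi(\lambda x)=\lambda\varphi(x)$ at $\lambda=1$ to obtain Euler's identity $d\varphi(x)x=\varphi(x)$, and then feed $y=h$ into the tangent inequality $\varphi(y)\ge\varphi(x)+d\varphi(x)(y-x)$; the same conclusion drops out after cancelling $\varphi(x)$. Either route reduces the lemma to the single analytic fact that the first-order convexity inequality holds in a $C^*$-algebra.
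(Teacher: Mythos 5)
Your proof is correct, and worth noting is that the paper does not actually prove this lemma itself: it defers entirely to \cite[Lemma 5]{kn:lieb:1973:1}. Measured against Lieb's argument, your primary route (subadditivity $\varphi(x+h)\le\varphi(x)+\varphi(h)$ from midpoint convexity plus degree-one homogeneity, combined with the tangent inequality $d\varphi(x)h\le\varphi(x+h)-\varphi(x)$) is a mild variant, while the alternative you sketch at the end---differentiating $\varphi(\lambda x)=\lambda\varphi(x)$ to get Euler's identity $d\varphi(x)x=\varphi(x)$ and substituting $y=h$ into $\varphi(y)\ge\varphi(x)+d\varphi(x)(y-x)$---is essentially Lieb's original proof. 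Both routes hinge on the single analytic point you correctly isolate: a norm limit of difference quotients, each dominated by a fixed element in the Löwner order, keeps that bound because the positive cone $\mathcal A_+$ is norm-closed, so the set $\{a\in\mathcal A_\text{sa}: a\le c\}$ is closed. Two small checks you use implicitly are both fine: $x+th\in\mathcal D$ for $t\in[0,1]$ since a convex cone is closed under addition and positive scaling, and ``positively homogeneous'' here means homogeneous of degree one, which is exactly the form established in Theorem 3.1 of the paper before the lemma is applied. The subadditivity route buys you a statement ($d\varphi(x)h\le\varphi(x+h)-\varphi(x)\le\varphi(h)$) whose intermediate inequality is itself reusable, whereas the Euler route is marginally shorter; either would serve as a complete substitute for the omitted proof.
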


Let $ H $ be any $ n\times n $ matrix. The map
\[
A\to \tr (H^* A^p H)^{1/p},
\]
defined in positive definite $ n\times n $ matrices,
is concave for $ 0<p\le 1 $ and convex for $ 1\le p\le 2, $ cf. \cite[Theorem 1.1]{kn:carlen:2008}.  By a slight modification of the construction given in Remark 3.2 in the same reference, cf. also \cite{kn:hansen:2014}, we obtain that the mapping
\begin{equation}\label{Carlen-Lieb result}
(A_1,\dots,A_k)\to\tr (H_1^*A_1^pA_1+\cdots+H_k^* A_k H_k)^{1/p},
\end{equation}
defined in $ k $-tuples of positive definite $ n\times n $ matrices, is concave for $ 0<p\le 1 $ and convex for $ 1\le p\le 2; $ for arbitrary $ n\times n $ matrices $ H_1,\dots,H_k. $

\section{Deformed trace functions}

\begin{theorem}\label{concavity (convexity) of deformed trace functions}
Let $ H_1,\dots,H_k $ be matrices with $ H_1^*H_1 +\cdots+ H_k^*H_k=1 $ and define the function
\begin{equation}\label{definition of varphi-function}
\varphi(A_1,\dots,A_k)=\tr\exp_q\Bigl(\sum_{i=1}^k H_i^*\log_q(A_i) H_i\Bigr)
\end{equation}
in $ k $-tuples of positive definite matrices.
Then $ \varphi $ is positively homogeneous of degree one. It is concave for $ 1\le q\le 2 $ and convex for $ 2\le q\le 3. $
\end{theorem}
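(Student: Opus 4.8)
The plan is to show that, thanks to the explicit forms of $\log_q$ and $\exp_q$ together with the normalisation $\sum_i H_i^*H_i=1$, the function $\varphi$ collapses to one of the fractional-power trace functions whose concavity and convexity is already recorded in \eqref{Carlen-Lieb result}; once that reduction is made, both assertions are immediate. I would set $p=q-1$, so that $p$ ranges over $(0,2]$ as $q$ ranges over $(1,3]$, and write $\log_q(A_i)=p^{-1}(A_i^p-1)$ through the functional calculus.

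First I would compute, using the normalisation,
\[
\sum_{i=1}^k H_i^*\log_q(A_i)H_i
=\frac1p\Bigl(\sum_{i=1}^k H_i^*A_i^pH_i-\sum_{i=1}^k H_i^*H_i\Bigr)
=\frac1p\Bigl(\sum_{i=1}^k H_i^*A_i^pH_i-1\Bigr),
\]
and call this self-adjoint matrix $Y$. Since each $A_i^p$ is bounded below by a positive multiple of the identity, the matrix $pY+1=\sum_i H_i^*A_i^pH_i$ is positive definite; in particular the spectrum of $Y$ lies in $(-p^{-1},\infty)$, so $\exp_q$ is legitimately applicable. Applying $\exp_q(x)=(px+1)^{1/p}$ through the functional calculus then gives
\[
\exp_q(Y)=(pY+1)^{1/p}=\Bigl(\sum_{i=1}^k H_i^*A_i^pH_i\Bigr)^{1/p},
\qquad\text{so}\qquad
\varphi(A_1,\dots,A_k)=\tr\Bigl(\sum_{i=1}^k H_i^*A_i^pH_i\Bigr)^{1/p}.
\]

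From this identity positive homogeneity of degree one is clear: scaling every $A_i$ by $\lambda>0$ multiplies the inner sum by $\lambda^p$ and hence the $(1/p)$-power by $\lambda$. The concavity and convexity then read off directly from \eqref{Carlen-Lieb result}, which is concave for $0<p\le1$ and convex for $1\le p\le2$; translating back through $p=q-1$ yields concavity for $1<q\le2$ and convexity for $2\le q\le3$. The endpoint $q=1$, where $p=0$ and the reduction degenerates, I would handle by continuity: from $A_i^p=1+p\log A_i+o(p)$ one finds that $\varphi(A_1,\dots,A_k)\to\tr\exp\bigl(\sum_i H_i^*\log(A_i)H_i\bigr)$ as $q\to1$, which is the value of $\varphi$ at $q=1$, and concavity is preserved under this pointwise limit.

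The proof carries essentially no analytic difficulty of its own, since the only substantial ingredient is the cited Carlen--Lieb theorem. The main thing to get right is the algebraic collapse itself, namely recognising that the normalisation condition is exactly what turns $\exp_q\bigl(\sum_i H_i^*\log_q(A_i)H_i\bigr)$ into the single fractional power $\bigl(\sum_i H_i^*A_i^pH_i\bigr)^{1/p}$, together with the accompanying bookkeeping that keeps $pY+1$ positive definite so that both the application of $\exp_q$ and the fractional powers are well defined.
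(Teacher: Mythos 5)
Your proposal is correct and follows essentially the same route as the paper: the same algebraic collapse of $\varphi$ into $\tr\bigl(\sum_i H_i^*A_i^{q-1}H_i\bigr)^{1/(q-1)}$ via the normalisation condition, the same appeal to the Carlen--Lieb convexity/concavity result (\ref{Carlen-Lieb result}) with $p=q-1$, and the same limiting argument for the endpoint $q=1$. Your added bookkeeping on the spectrum of $Y$ and the legitimacy of applying $\exp_q$ is a welcome explicit check that the paper leaves implicit.
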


\begin{proof}
For $ q>1 $ we obtain
\[
\begin{array}{l}
\varphi(A_1,\dots, A_k)
=\displaystyle\tr\exp_q\Bigl(\sum_{i=1}^k H_i^*\log_q(A_i) H_i\Bigr)\\[1ex]
=\displaystyle\tr\Bigl((q-1)\Bigl(\sum_{i=1}^k H_i^*\log_q(A_i)H_i\Bigr)+1\Bigr)^{1/(q-1}\\[1ex]
=\displaystyle\tr\Bigl((q-1)\Bigl(\sum_{i=1}^k H_i^*\frac{A_i^{q-1}-1}{q-1}H_i\Bigr)+1\Bigr)^{1/(q-1)}\\[1ex]
=\displaystyle\tr\Bigl(\sum_{i=1}^k H_i^*(A_i^{q-1} -1)H_i+1\Bigr)^{1/(q-1)}\\[3.5ex]
=\displaystyle\tr\bigl(H_1^* A_1^{q-1}H_1+\cdots+H_k^*A_k^{q-1}H_k\bigr)^{1/(q-1)}.
\end{array}
\]
From this identity it follows that $ \varphi $ is positively homogeneous of degree one. The concavity for $ 1<q\le 2 $ and the convexity for $ 2\le q\le 3 $ now follows from (\ref{Carlen-Lieb result}). The statement for $ q=1 $ follows by letting $ q $ tend to one. 
\end{proof}

\begin{corollary}
Let $ L $ be positive definite, and let $ H_1,\dots,H_k $ be matrices such that $ H_1^*H_1 +\cdots+ H_k^*H_k\le 1. $
Then the function
\[
\varphi(A_1,\dots,A_k)=\tr\exp_q\bigl(L+H_1^*\log_q(A_1)H_1+\cdots+H_k^*\log_q(A_k)H_k\bigr),
\]
defined in $ k $-tuples of positive definite matrices, is concave for $ 1\le q\le 2 $ and convex for $ 2\le q\le 3. $ 
\end{corollary}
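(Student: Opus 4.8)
The plan is to mimic the algebraic reduction carried out in the proof of the preceding theorem, so as to bring $\varphi$ into the form covered by the Carlen--Lieb result~(\ref{Carlen-Lieb result}); the point is that (\ref{Carlen-Lieb result}) allows \emph{arbitrary} matrices $H_i$ with no normalisation constraint. Assume first $q>1$. Using $\exp_q(x)=((q-1)x+1)^{1/(q-1)}$ and $(q-1)\log_q(A_i)=A_i^{q-1}-1$, I would first rewrite
\[
\varphi(A_1,\dots,A_k)=\tr\Bigl(\sum_{i=1}^k H_i^* A_i^{q-1} H_i + M\Bigr)^{1/(q-1)},\qquad M=(q-1)L+1-\sum_{i=1}^k H_i^* H_i .
\]
The hypotheses on $L$ and the $H_i$ serve exactly to make the residual matrix $M$ positive definite: for $q>1$ the term $(q-1)L$ is positive definite, while $1-\sum_i H_i^* H_i$ is positive semidefinite by assumption, so $M>0$.

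Next I would carry $M$ by one additional matrix variable that is frozen at the identity. Setting $A_{k+1}=1$ and $H_{k+1}=M^{1/2}$, the positive definite square root of $M$, we have $H_{k+1}^* A_{k+1}^{q-1} H_{k+1}=M$, so that $\varphi$ is the restriction to $A_{k+1}=1$ of
\[
\psi(A_1,\dots,A_{k+1})=\tr\Bigl(\sum_{i=1}^{k+1} H_i^* A_i^{q-1} H_i\Bigr)^{1/(q-1)} .
\]
By (\ref{Carlen-Lieb result}), applied to the $k+1$ matrices $H_1,\dots,H_k,M^{1/2}$ with exponent $p=q-1$, the function $\psi$ is concave for $0<q-1\le 1$, i.e.\ $1<q\le 2$, and convex for $1\le q-1\le 2$, i.e.\ $2\le q\le 3$. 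Since fixing one argument of a jointly concave (respectively convex) function leaves a concave (respectively convex) function of the remaining arguments, $\varphi$ inherits the stated concavity or convexity. The boundary case $q=1$ then follows by letting $q$ tend to one, exactly as in the preceding theorem.

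The one genuine point to check is the positive definiteness of $M$, and this is where the hypotheses $L>0$ and $\sum_i H_i^* H_i\le 1$ are used; the rest is the algebraic rewriting of $\varphi$ already performed for the theorem, together with the elementary fact that freezing a coordinate preserves convexity. In particular I do not expect to need the normalisation $\sum_i H_i^* H_i=1$ of the theorem: here the residual $1-\sum_i H_i^* H_i$ is absorbed into the extra variable rather than cancelled against the constant $1$, which is precisely what permits the weaker hypothesis $\sum_i H_i^* H_i\le 1$.
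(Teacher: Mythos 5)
Your proof is correct, but it takes a genuinely different route from the paper. The paper derives the corollary as a black-box consequence of the preceding theorem: it first reduces without loss of generality to the strict case $H_1^*H_1+\cdots+H_k^*H_k<1$, completes the family with $H_{k+1}=\bigl(1-(H_1^*H_1+\cdots+H_k^*H_k)\bigr)^{1/2}$ so that the normalisation of the theorem holds, and then generates the summand $L$ by the clever specialisation $A_{k+1}=\exp_q\bigl(H_{k+1}^{-1}LH_{k+1}^{-1}\bigr)$, which is why it needs $H_{k+1}$ invertible (hence the WLOG step) and uses $L>0$ to place $H_{k+1}^{-1}LH_{k+1}^{-1}$ in the right domain. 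You instead redo the algebraic rewriting from the theorem's proof and absorb both $(q-1)L$ and the slack $1-\sum_i H_i^*H_i$ into a single residual matrix $M=(q-1)L+1-\sum_i H_i^*H_i>0$, then apply the Carlen--Lieb result (\ref{Carlen-Lieb result}) directly with the extra pair $H_{k+1}=M^{1/2}$, $A_{k+1}=1$ frozen. This buys you several small advantages: no WLOG reduction or invertibility issue (the case $\sum_i H_i^*H_i=1$ is handled uniformly, since then $M=(q-1)L>0$ still), a transparent accounting of where the hypotheses $L>0$ and $\sum_i H_i^*H_i\le 1$ enter, and in fact a slightly sharper statement for fixed $q>1$, namely that $(q-1)L+1-\sum_i H_i^*H_i>0$ suffices. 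What the paper's route buys is economy: it reuses the theorem verbatim without reopening its proof, and the specialisation trick $A_{k+1}=\exp_q(H_{k+1}^{-1}LH_{k+1}^{-1})$ is the structurally interesting observation there. Your limiting argument for $q=1$ (pointwise limits of concave functions are concave) matches the device the paper itself uses in the theorem, so no gap there either.
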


\begin{proof} We may without loss of generality assume $ H_1^*H_1 +\cdots+ H_k^*H_k< 1 $ and put $ H_{k+1}=\bigl(1-( H_1^*H_1 +\cdots+ H_k^*H_k)\bigr)^{1/2}. $ We then have
\[
H_1^*H_1+\cdots+H_k^*H_k+H_{k+1}^*H_{k+1}=1
\]
and may use the preceding theorem to conclude that the function
\[
\begin{array}{l}
(A_1,\dots,A_{k+1})\to
\tr\exp_q\bigl(H_1^*\log_q(A_1)H_1+\cdots+H_{k+1}^*\log_q(A_{k+1})H_{k+1}\bigr)
\end{array}
\]
of $ k+1 $ variables is  concave for $ 1\le q\le 2 $ and convex for $ 2\le q\le 3. $ Since $ H_{k+1} $ is invertible we may choose
\[
A_{k+1}=\exp_q\bigl(H_{k+1}^{-1}LH_{k+1}^{-1}\bigr)
\]
which makes sense since $ H_{k+1}^{-1}LH_{k+1}^{-1} $ is positive definite. Concavity for $ 1\le q\le 2 $ and  convexity for $ 2\le q\le 3 $ in the first $ k $ variables of the above function then yields the result.
\end{proof}

Setting $ q=1 $ we recover in particular \cite[Theorem 3]{kn:lieb:2005}.

\begin{corollary}
Let $ H_1,\dots,H_k $ be matrices with $ H_1^*H_1+\cdots+H_k^*H_k\le 1, $ and let $ L $ be self-adjoint. The trace function
\[
(A_1,\dots,A_k)\to \tr\exp\bigl(L+H_1^*\log(A_1)H_1+\cdots+H_k^* \log (A_k)H_k\bigr)
\]
is concave in positive definite matrices.
\end{corollary}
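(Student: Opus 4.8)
The plan is to recognise this corollary as the specialisation $ q=1 $ of the preceding corollary, the only genuine difference being that we now allow $ L $ to be merely self-adjoint rather than positive definite. First I would observe that the preceding corollary already delivers the assertion whenever $ L $ is positive definite: at $ q=1 $ the deformed functions reduce to $ \exp_1=\exp $ and $ \log_1=\log $, the auxiliary matrix $ H_{k+1} $ is invertible and self-adjoint, and the choice $ A_{k+1}=\exp(H_{k+1}^{-1}LH_{k+1}^{-1}) $ makes sense precisely because $ H_{k+1}^{-1}LH_{k+1}^{-1} $ is positive definite for positive definite $ L $. Thus only the positivity requirement on $ L $ remains to be removed.

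To bridge this gap I would use a scalar shift. Given a self-adjoint $ L $, choose a constant $ c>0 $ so large that $ L+c\cdot 1 $ is positive definite. Since $ c\cdot 1 $ is central, for every $ k $-tuple of positive definite matrices $ A_1,\dots,A_k $ we have
\[
\exp\Bigl(L+c\cdot 1+\textstyle\sum_{i=1}^k H_i^*\log(A_i)H_i\Bigr)=e^{c}\exp\Bigl(L+\textstyle\sum_{i=1}^k H_i^*\log(A_i)H_i\Bigr),
\]
so that, after taking traces, the trace function built from $ L+c\cdot 1 $ equals $ e^{c} $ times the trace function built from $ L $. By the previous step the former is concave, and multiplication by the positive constant $ e^{c} $ preserves concavity; hence the latter is concave too, which is exactly the claim.

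I expect no serious obstacle here, the argument being essentially bookkeeping once the shift is in place. The one point meriting care is to confirm that the case $ q=1 $ is honestly covered by the preceding corollary and not merely approached in the limit. Should one prefer not to invoke the $ q=1 $ instance of that corollary directly, an alternative route is to fix the $ k $-tuple and let $ q\to 1^{+} $: on the compact spectral ranges determined by the fixed positive definite $ A_i $ the functions $ \log_q $ and $ \exp_q $ converge uniformly to $ \log $ and $ \exp $, so the concave functions $ \varphi_q $ converge pointwise to the function in question, and a pointwise limit of concave functions is concave. In that approach the only delicate verification is that the argument of $ \exp_q $ stays inside its domain $ (-(q-1)^{-1},\infty) $ for $ q $ near $ 1 $, which holds because this domain exhausts the real line as $ q\to 1^{+} $ while the relevant spectra remain bounded.
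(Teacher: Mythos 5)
Your proposal is correct, and in essence it follows the paper, which offers no written proof at all: the corollary is presented there as the immediate specialisation $q=1$ of the preceding corollary (recovering Lieb--Seiringer). What you add --- and what the paper leaves tacit --- is the bridge from positive definite $L$ to merely self-adjoint $L$. Your scalar shift $\exp\bigl(L+c\cdot 1+\sum_{i} H_i^*\log(A_i)H_i\bigr)=e^{c}\exp\bigl(L+\sum_{i} H_i^*\log(A_i)H_i\bigr)$ is valid and works precisely because $q=1$; for $q\neq 1$ no such multiplicative relation holds, which is why positivity of $L$ is assumed in the deformed statement. A slightly more direct way to close the same gap, probably the one the author has in mind, is to note that the proof of the preceding corollary goes through verbatim at $q=1$ for self-adjoint $L$: the key choice $A_{k+1}=\exp\bigl(H_{k+1}^{-1}LH_{k+1}^{-1}\bigr)$ makes sense for \emph{every} self-adjoint $L$, since $\exp=\exp_1$ is defined on all of $\mathbb{R}$ (equivalently, $\log$ maps $(0,\infty)$ onto all of $\mathbb{R}$), so no shift is needed. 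Your worry about whether $q=1$ is ``honestly covered'' is groundless: both Theorem~3.1 (where $q=1$ is obtained inside the proof by letting $q$ tend to one) and the preceding corollary assert their conclusions for $1\le q\le 2$, so you may cite them at $q=1$ outright. Consequently your fallback limiting argument $q\to 1^{+}$ is unnecessary, and as sketched it is also slightly off: for $q>1$ the preceding corollary requires $L$ positive definite, not merely that the argument of $\exp_q$ stay in its domain, so you would have to perform the scalar shift before taking the limit anyway.
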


\begin{corollary}\label{main inequality for k variables}
The trace function $ \varphi $ defined in (\ref{definition of varphi-function}) satisfies
\[
\varphi(B_1,\dots,B_k)\le
\displaystyle\tr\exp_q \Bigl(\sum_{i=1}^k H_i^*\log_q(A_i) H_i\Bigr)^{2-q} \sum_{j=1}^k H_j^*(\fd{}\log_q (A_j)B_j) H_j
\]
for $ 1\le q\le 2 $ and
\[
\varphi(B_1,\dots,B_k)\ge
\displaystyle\tr\exp_q \Bigl(\sum_{i=1}^k H_i^*\log_q(A_i) H_i\Bigr)^{2-q} \sum_{j=1}^k H_j^*(\fd{}\log_q(A_j)B_j) H_j
\]
for $ 2\le q\le 3, $ where $ A_1,\dots,A_k $ and $ B_1,\dots,B_k $ are positive definite matrices.
\end{corollary}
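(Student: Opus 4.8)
The plan is to recognise the right-hand side as the Fréchet derivative $ d\varphi(A_1,\dots,A_k)(B_1,\dots,B_k) $ and then to invoke Lemma~\ref{differential inequality}. By Theorem~\ref{concavity (convexity) of deformed trace functions} the function $ \varphi $ is positively homogeneous of degree one, and it is concave for $ 1\le q\le 2 $ and convex for $ 2\le q\le 3 $. Moreover it is Fréchet differentiable on the open convex cone of $ k $-tuples of positive definite matrices, since $ \log_q $ and $ \exp_q $ are smooth on the positive half-line and $ \varphi $ is a composition of the induced (Fréchet differentiable) matrix functions with the trace and the linear maps $ X\mapsto H_i^* X H_i $.

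First I would compute $ d\varphi $ by the chain rule, writing $ \varphi=\psi\circ\Phi $ with $ \Phi(A_1,\dots,A_k)=\sum_i H_i^*\log_q(A_i)H_i $ and $ \psi(X)=\tr\exp_q(X) $. The derivative of the inner map is $ d\Phi(A)(B)=\sum_j H_j^*(\fd{}\log_q(A_j)B_j)H_j $, while the standard identity $ d(\tr f(X))(Y)=\tr(f'(X)Y) $ applied with $ f=\exp_q $ gives $ d\psi(X)(Y)=\tr(\exp_q(X)^{2-q}Y) $. Here I use $ \df{}{x}\exp_q(x)=\exp_q(x)^{2-q} $ together with the fact that a matrix function of $ X $ and a power of $ \exp_q(X) $ share the same eigenprojections, so that $ f'(X)=\exp_q(X)^{2-q} $ as matrices. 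Composing the two derivatives yields
\[
d\varphi(A)(B)=\tr\exp_q\Bigl(\sum_{i=1}^k H_i^*\log_q(A_i)H_i\Bigr)^{2-q}\sum_{j=1}^k H_j^*(\fd{}\log_q(A_j)B_j)H_j,
\]
which is exactly the expression on the right-hand side of both asserted inequalities.

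It then remains to apply the differential inequality. For $ 2\le q\le 3 $ the function $ \varphi $ is convex and positively homogeneous, so Lemma~\ref{differential inequality} yields $ d\varphi(A)(B)\le\varphi(B) $, which is the claimed lower bound $ \varphi(B_1,\dots,B_k)\ge d\varphi(A)(B) $. For $ 1\le q\le 2 $ the function $ \varphi $ is concave, so $ -\varphi $ is convex and positively homogeneous; applying the lemma to $ -\varphi $ gives $ -d\varphi(A)(B)\le-\varphi(B) $, i.e. $ \varphi(B_1,\dots,B_k)\le d\varphi(A)(B) $, the claimed upper bound. The endpoint $ q=1 $ is obtained by letting $ q\to 1 $. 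I expect the main obstacle to be the rigorous justification of the derivative computation—in particular verifying the matrix identity $ f'(X)=\exp_q(X)^{2-q} $ and the trace-derivative formula—since once $ d\varphi $ has been correctly identified the inequalities follow immediately from Lemma~\ref{differential inequality}.
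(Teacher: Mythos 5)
Your proposal is correct and follows essentially the same route as the paper: both identify the right-hand side as $\fd{}\varphi(A_1,\dots,A_k)(B_1,\dots,B_k)$ via the chain rule together with the trace-derivative identity $\tr\fd{}f(A)B=\tr f'(A)B$ applied to $f=\exp_q$ with $f'(x)=\exp_q(x)^{2-q}$, and then invoke Lemma~\ref{differential inequality} (with the sign reversed in the concave case $1\le q\le 2$). Your explicit handling of the concave case via $-\varphi$ and your care about justifying $f'(X)=\exp_q(X)^{2-q}$ are merely more detailed versions of steps the paper leaves implicit.
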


\begin{proof} For $ 1\le q\le 2 $ we obtain
\[
\fd{}\varphi(A_1,\dots,A_k)(B_1,\dots,B_k)\ge \varphi(B_1,\dots,B_k)
\]
by Lemma~ \ref{differential inequality}. By the chain rule for Fréchet differentiable mappings between Banach spaces we therefore obtain
\[
\begin{array}{l}
\varphi(B_1,\dots,B_k)\le\displaystyle\sum_{j=1}^k d_j \varphi(A_1,\dots,A_k)B_j \\[2ex]
=\displaystyle\sum_{j=1}^k \tr\fd{}\exp_q \Bigl(\sum_{i=1}^k H_i^*\log_q(A_i) H_i\Bigr) H_j^*(\fd{}\log_q(A_j)B_j) H_j\\[3ex]
=\displaystyle\sum_{j=1}^k \tr\exp_q \Bigl(\sum_{i=1}^k H_i^*\log_q(A_i) H_i\Bigr)^{2-q} H_j^*(\fd{}\log_q(A_j)B_j) H_j
\end{array}
\]
where we used the identity $ \tr\fd{}f(A)B=\tr f'(A)B $ valid for differentiable functions.
This proves the first assertion. The result for $ 2\le q\le 3 $ follows similarly.
\end{proof}

\section{Proof of the main theorem}

In order to prove Theorem~\ref{main theorem} (i) we set $ k=2 $ in Corollary~\ref{main inequality for k variables} and obtain 
\[
\varphi(B_1,B_2)\le\tr\exp_q(X)^{2-q}\bigl(H_1^* (\fd{}\log_q(A_1)B_1) H_1 + H_2^* (\fd{}\log_q(A_2)B_2) H_2\big)
\]
for $ 1\le q\le 2 $ and positive definite matrices $ A_1,A_2 $ and $ B_1,B_2 $ where
\[
X=H_1^*\log_q(A_1) H_1 + H_2^* \log_q(A_2) H_2\,.
\]
If we set $ A_1=B_1 $ and $ A_2=1 $ the inequality reduces to
\[
\varphi(B_1,B_2)\le \tr\exp_q(H_1^* \log_q(B_1) H_1)^{2-q}\bigl(H_1^* B_1^{q-1} H_1 +H_2^* B_2 H_2\bigr).
\]
We now set $ H_1=\varepsilon^{1/2} $ for $ 0<\varepsilon<1, $ and to fixed positive definite matrices $ L_1 $ and $ L_2 $ we choose $ B_1 $ and $ B_2 $ such that
\[
\begin{array}{rl}
L_1&=H_1^*\log_q(B_1) H_1=\varepsilon\log_q(B_1)\\[1ex]
L_2&=H_2^*\log_q(B_2) H_2=(1-\varepsilon)\log_q(B_2).
\end{array}
\]
It follows that
\[
B_1=\exp_q(\varepsilon^{-1}L_1)\qquad\text{and}\qquad B_2=\exp_q((1-\varepsilon)^{-1}L_2).
\]
Inserting in the inequality we now obtain
\[
\begin{array}{l}
\tr\exp_q(L_1+L_2)\\[1.5ex]
\le\tr\exp_q(L_1)^{2-q}\bigl(\varepsilon \exp_q(\varepsilon^{-1}L_1)^{q-1}+(1-\varepsilon)\exp_q((1-\varepsilon)^{-1}L_2)\bigr)\\[1.5ex]
=\tr\exp_q(L_1)^{2-q}\bigl(L_1(q-1)+\varepsilon+(1-\varepsilon)\exp_q((1-\varepsilon)^{-1}L_2)\bigr).
\end{array}
\]
This expression decouble $ L_1 $ and $ L_2 $ and reduces the minimisation problem over $ \varepsilon $ to the commutative case. We furthermore realise that minimum is obtained by letting $ \varepsilon $ tend to zero and that
\[
\lim_{\varepsilon\to 0} (1-\varepsilon)\exp_q\bigl((1-\varepsilon)^{-1}L_2\bigr)=\exp_q(L_2).
\]
We finally replace $ L_1 $ and $ L_2 $ with $ A $ and $ B. $ This proves the first statement in Theorem~\ref{main theorem}.\\[1ex]
The proof of the second statement is virtually identical to the proof of the first. Since now $ 2\le q\le 3 $ the second inequality in Corollary~\ref{main inequality for k variables} applies. Setting $ k=2 $ and applying the same substitutions as in the proof of the first statement we arrive at the inequality
\[
\begin{array}{l}
\tr\exp_q(L_1+L_2)\\[1.5ex]
\ge \tr\exp_q(L_1)^{2-q}\bigl(L_1(q-1)+\varepsilon+(1-\varepsilon)\exp_q((1-\varepsilon)^{-1}L_2)\bigr).
\end{array}
\]
Since $ 2\le q\le 3 $ the function
\[
\varepsilon\to\varepsilon+(1-\varepsilon)\exp_q((1-\varepsilon)^{-1}L_2)
\]
is now decreasing, and we thus maximise the right hand side in the above inequality by letting $ \varepsilon $ tend to zero. This proves the second statement in Theorem~\ref{main theorem}.

{\small


\vfill

\noindent Frank Hansen: Institute for Excellence in Higher Education, Tohoku University, Japan.\\
Email: frank.hansen@m.tohoku.ac.jp.
      }

\end{document}